\newtheorem{theorem}{Theorem}
\newtheorem{lemma}{Lemma}
\newtheorem{conjecture}{Conjecture}
\newcommand{\etal}{{et~al.}}
\newcommand{\ie}{{i.e.}}
\newcommand{\eg}{{e.g.}}
\newcommand{\select}{\textsc{select }}
\newcommand{\later}[1]{}
\newcommand{\old}[1]{}
\title{\textsc{Selection Algorithms with Small Groups}\footnote{A preliminary
    version of this paper appeared in the
{\em Proceedings of the 29th International Symposium on Algorithms
and Data Structures}, (WADS 2015), Victoria, Canada, August 2015,
Vol. 9214 of LNCS, Springer, pp.~189--199. }}
\author{
Ke Chen
\thanks{Department of Computer Science,
University of Wisconsin--Milwaukee, USA\@. 
Email~\texttt{kechen@uwm.edu}.}
\and
Adrian Dumitrescu\thanks{%
Department of Computer Science,
University of Wisconsin--Milwaukee, USA\@.
Email~\texttt{dumitres@uwm.edu}.}
}
\begin{document}

\maketitle

\begin{abstract}
We revisit the selection problem, namely that of computing the $i$th
order statistic of $n$ given elements, in particular the classic
deterministic algorithm by grouping and partition due to Blum, Floyd,
Pratt, Rivest, and Tarjan~(1973). 
Whereas the original algorithm uses groups of odd size at least $5$ and
runs in linear time, it has been perpetuated in the literature that using 
smaller group sizes will force the worst-case running time to become
superlinear, namely $\Omega(n \log{n})$.
We first point out that the usual arguments found in the literature justifying 
the superlinear worst-case running time fall short of proving this claim.  
We further prove that it is possible to use group size smaller than $5$
while maintaining the worst case linear running time.  
To this end we introduce three simple variants of the classic algorithm, 
the repeated step algorithm, the shifting target algorithm, and the
hyperpair algorithm,
all running in linear time.

\medskip
\textbf{\small Keywords}: median selection, $i$th order statistic, comparison algorithm.

\end{abstract}

\section{Introduction} \label{sec:intro}

Together with sorting, selection is one of the most widely used
procedures in computer algorithms. Indeed, it is easy to find numerous algorithms
(documented in at least as many research articles) that use selection as a subroutine.
Two classic examples from computational geometry are~\cite{KS86,Me85}. 

Given a sequence $A$ of $n$ numbers (usually stored in an array), 
and an integer (target) parameter $1\leq i\leq n$, 
the selection problem asks to find the $i$th smallest element in $A$. 
Sorting the numbers trivially solves the selection problem, but if one aims at a
linear time algorithm, a higher level of sophistication is needed.  
A now classic approach for selection~\cite{BFP+73,FR75,Hy76,SPP76,Y76} 
from the 1970s is to use an element in $A$ as a pivot to partition $A$
into two smaller subsequences and recurse on one of them 
with a (possibly different) selection parameter $i$.  

The time complexity of this kind of algorithms is sensitive to the
pivots used. For example, if a good pivot is used, many elements in
$A$ can be discarded; whereas if a bad pivot is used, in the worst case,
the size of the problem may be only reduced by a constant, leading to a
quadratic worst-case running time. But choosing a good pivot can be
time consuming.  

Randomly choosing the pivots yields a well-known randomized algorithm
with expected linear running time
(see \eg,~\cite[Ch.~9.2]{CLR+09},~\cite[Ch.~13.5]{KT06}, or~\cite[Ch.~3.4]{MU05}),
however its worst case running time is quadratic in $n$. 

The first deterministic linear time selection algorithm \select
(called \textsc{pick} by the authors), in fact a theoretical
breakthrough at the time, was introduced by Blum~\etal~\cite{BFP+73}.
By using the median of medians of small (constant size) disjoint groups of $A$, 
good pivots that guarantee reducing the size of the problem by a
constant fraction can be chosen with low costs. 
The authors~\cite[page~451, proof of Theorem~1]{BFP+73} required 
the group size to be at least $5$ for the \select algorithm to
run in linear time. It has been perpetuated in the literature
the idea that \select with groups of $3$ or $4$ does not run in linear time:
an exercise of the book by Cormen~\etal~\cite[page~223, exercise~{9.3-1}]{CLR+09} 
asks the readers to argue that 
``\select does not run in linear time if groups of $3$ are used''.

We first point out that the argument for the $\Omega(n \log{n})$ lower
bound in the solution to this exercise~\cite[page~23]{CLL09} 
is incomplete by failing to provide an input sequence with one third
of the elements being discarded in each recursive call in both the
current sequence and its sequence of medians; the difficulty in
completing the argument lies in the fact that these two sequences are
not disjoint thus cannot be constructed or controlled independently. 
The question whether the original \select algorithm runs in linear time
with groups of $3$ remains open at the time of this writing. 

Further, we show that this restriction on the group size is
unnecessary, namely that group sizes smaller than $5$ can be used by a
linear time deterministic algorithm for the selection problem. 
Since selecting the median in smaller groups is easier to implement
and requires fewer comparisons (\eg, $3$ comparisons for group size
$3$ versus $6$ comparisons for group size $5$), 
it is attractive to have linear time selection algorithms that use
smaller groups. Our main result concerning selection with small
group size is summarized in the following theorem.  
\begin{theorem} \label{thm:thm1}
There exist suitable variants of \select with groups of $2$, $3$, and $4$ 
running in $O(n)$ time.
\end{theorem}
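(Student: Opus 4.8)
The plan is to recall why the textbook analysis fails for group size $3$, and then engineer a variant whose recurrence provably solves to $O(n)$. Recall that the standard analysis yields a recurrence of the form $T(n) \le T(n/g) + T(c_g n) + O(n)$, where $T(n/g)$ is the cost of recursively finding the median of the $n/g$ group medians, and $c_g n$ bounds the size of the surviving subproblem after partitioning around that pivot. For $g=5$ one gets $c_5 = 7/10$, and since $1/5 + 7/10 = 9/10 < 1$ the recurrence is linear; but for $g=3$ the naive bound gives $c_3 = 2/3$ and $1/3 + 2/3 = 1$, so the standard master-type argument only yields $O(n\log n)$ and provides no linear bound. The whole difficulty is to push the sum of the two recursive fractions strictly below $1$ while still using groups of size $3$ (or $4$).

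My first variant, the \emph{repeated step} algorithm, attacks the $1/3$ term. The idea is that after computing the $n/3$ medians, instead of recursing directly to find their exact median, I apply the median-of-medians grouping step a second time: group the $n/3$ medians into groups of $3$, take their medians (the ``median of medians of medians''), and only then recurse on these $n/9$ elements. This replaces the cost $T(n/3)$ by $T(n/9)$ plus linear overhead, at the price of using a pivot whose rank guarantee is weaker, so the surviving fraction $c$ grows. The key computation is to verify that the new guaranteed fraction of elements eliminated on each side, combined with the reduced $1/9$ recursion, still sums to something strictly less than $1$; I would track the rank of the doubly-iterated pivot carefully through the two grouping stages and show the resulting recurrence $T(n)\le T(n/9)+T(c\,n)+O(n)$ has $1/9 + c < 1$, hence solves to $O(n)$ by the standard substitution $T(n)\le Cn$.

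The second variant, the \emph{shifting target} algorithm, instead attacks the $c_3 = 2/3$ term. Here the observation is that when the median-of-medians pivot is only guaranteed to eliminate a fraction of elements on one side, one need not insist on finding the \emph{exact} median of the group medians; one can deliberately select an off-center order statistic of the medians as the pivot, shifting the target rank so as to guarantee a larger elimination fraction on the side that actually matters for the current selection index. By choosing the target rank among the medians adaptively, I can trade a slightly larger first recursive subproblem for a strictly smaller surviving fraction, again driving the sum of the two fractions below $1$. I would set up the recurrence with the shifted parameters and verify linearity by the same guess-and-substitute method.

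The main obstacle, I expect, is the rank bookkeeping rather than the recurrence solving. Because the group medians are a subsequence of the whole input and are \emph{not} independent of it (exactly the flaw the paper identifies in the textbook counterexample), I must argue the elimination guarantees purely from counting how many elements are certifiably $\le$ or $\ge$ the chosen pivot, being careful that the two nested grouping stages (in the repeated-step variant) or the off-center selection (in the shifting-target variant) compose correctly and that floors and ceilings for small $n$ do not erode the strict inequality. Once the constant $c$ is pinned down so that the relevant fraction sum is a constant strictly below $1$, the induction $T(n)\le Cn$ closes routinely for a suitable $C$ and base case, giving the claimed $O(n)$ bound for both $g=3$ and $g=4$.
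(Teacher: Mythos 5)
Your first variant is exactly the paper's repeated step algorithm (Lemma~\ref{lem:1}), and your plan for it is sound; what remains is the routine bookkeeping you defer: the median $m$ of the $n/9$ medians-of-medians is $\geq$ at least $n/18$ elements of $M'$, each of which is $\geq 2$ elements of its group in $M$, each of which is $\geq 2$ elements of its group in $A$, so $m$ is certified $\geq 2n/9$ elements of $A$ (and symmetrically $\leq 2n/9$), giving $T(n)\leq T(n/9)+T(7n/9)+O(n)$ with $1/9+7/9=8/9<1$. Since the groups are disjoint at each of the two levels, the double counting you worry about does not arise, and the same idea carries over verbatim to groups of $4$ (the paper notes $T(n)\leq T(n/16)+T(7n/8)+O(n)$ in its concluding remarks), so this one variant already suffices to prove the theorem for both group sizes.

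Your second variant, however, is not the paper's shifting target algorithm, and as described it has a real gap. The paper's algorithm still selects the \emph{exact} median of the group medians; what shifts is the choice of lower versus upper median \emph{within each group of $4$}, keyed to whether $i\leq n/2$, and linearity is obtained only by amortizing over up to three consecutive iterations (tracking how the ratio $i/n$ drops from $\leq 1/2$ to $\leq 1/3$ to $\leq 1/9$, at which point a $3n/8$ elimination is forced). Your proposal instead picks an off-center order statistic of the medians. For groups of $3$ this cannot close in a single iteration: if the pivot is the $\beta n/3$-th smallest median, the surviving fraction is at most $(1+2\beta)/3$ in one case and $1-2\beta/3$ in the other, and making both less than $2/3$ (as needed to beat the $T(n/3)$ cost) forces $\beta<1/2$ and $\beta>1/2$ simultaneously. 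So your off-center scheme would also require a multi-iteration amortization that you have not supplied, and you have not verified that the unfavorable case cannot recur indefinitely. This does not sink the theorem --- the repeated step route covers both group sizes --- but the second half of your proposal should either be replaced by the paper's lower/upper-median switching argument or be completed with an explicit multi-step recurrence.
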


\paragraph{Historical background.}
The interest in selection algorithms
has remained high over the years with many exciting
developments (\eg, lower bounds, parallel algorithms, etc) taking
place; we only cite a few here~\cite{AKSS89,BJ85,CM89,DHU+01,DZ96,DZ99,FR75,FG79,
  GKP96,HS69,Ho61,J88,Ki81,Pa96,YY82,Y76}. 
We also refer the reader to the dedicated book chapters on selection
in~\cite{AHU83,Ba88,CLR+09,DPV08,KT06,Kn98} and
the more recent articles~\cite{Al17,Ki13}, including experimental work.

\paragraph{Outline.}
In Section~\ref{sec:prelim}, the classic \select algorithm is 
introduced (rephrased) under standard simplifying assumptions. 
In Section~\ref{sec:repeat}, we introduce a variant of \textsc{select}, 
the \emph{repeated step} algorithm, which runs in linear time with
either group size $3$ and $4$.  
With groups of $3$, the algorithm executes a certain step, 
``group by $3$ and find the medians of the groups'', twice in a row.   
In Section~\ref{sec:shift}, we introduce another variant of \textsc{select}, 
the \emph{shifting target} algorithm, 
a linear time selection algorithm with group size~$4$. 
In each iteration, upper or lower medians are used based on the
current rank of the target, and the shift in the target parameter $i$ is
controlled over three consecutive iterations. 
In Section~\ref{sec:hyperpair}, we introduce yet another variant of \textsc{select}, 
the \emph{hyperpair} algorithm, 
a linear time selection algorithm with group size~$2$. 
The algorithm performs the ``group by pairs'' step four times in a row
to form hyperpairs.
In Section~\ref{sec:variants}, we briefly introduce three other variants
of \textsc{select} with group size $4$, including one due to Zwick~\cite{Z14},
all running in linear time. 

In Section~\ref{sec:experiment}, 
we compare our algorithms (with group size $3$ and $4$) 
with the original \select algorithm (with group size $5$) by deriving 
upper bounds on the exact numbers of comparisons used by each algorithm.
We also present experimental results that verify our numeric calculations.
In Section~\ref{sec:conclusion}, we summarize our results and formulate
a conjecture on the running time of the original \select algorithm from~\cite{BFP+73}
with groups of $3$ and~$4$, as suggested by our study.

\section{Preliminaries} \label{sec:prelim}

Without affecting the results, the following two standard
simplifying assumptions are convenient:
(i)~the input sequence $A$ contains $n$ distinct numbers; and 
(ii)~the floor and ceiling functions are omitted in the descriptions of
the algorithms and their analyses.
We also assume that all the grouping steps are carried out using the ``natural'' order,
\ie, given a sequence $A=\left\{a_1, a_2,\ldots, a_n\right\}$, 
``arrange $A$ into groups of size $m$'' means that group~1 contains $a_1,a_2,\ldots, a_m$,
group~2 contains $a_{m+1}, a_{m+2},\ldots, a_{2m}$ and so on.
Under these assumptions, \select with groups of $5$ (from~\cite{BFP+73}) can
be described as follows (using this group size has become increasingly
popular, see \eg,~\cite[Ch.~9.2]{CLR+09}): 
\begin{enumerate} \itemsep 1pt
\item \label{alg5:step1}
If $n\leq 5$, sort $A$ and return the $i$th smallest number.
\item
Arrange $A$ into groups of size 5.
Let $M$ be the sequence of medians of these $n/5$ groups.
Select the median of $M$ recursively, let it be $m$.
\item
Partition $A$ into two subsequences $A_1=\{x|x<m\}$ and $A_2=\{x|x>m\}$
(the order of elements is preserved).
If $i=|A_1|+1$, return $m$. 
If $i<|A_1|+1$, go to step~\ref{alg5:step1} with
$A \leftarrow A_1$ and $n\leftarrow |A_1|$. 
If $i>|A_1|+1$, go to step~\ref{alg5:step1} with $A \leftarrow A_2$,
$n\leftarrow |A_2|$ and $i\leftarrow i-|A_1|-1$.
\end{enumerate}
\begin{figure}[htbp]
\centering
\includegraphics[scale=0.6]{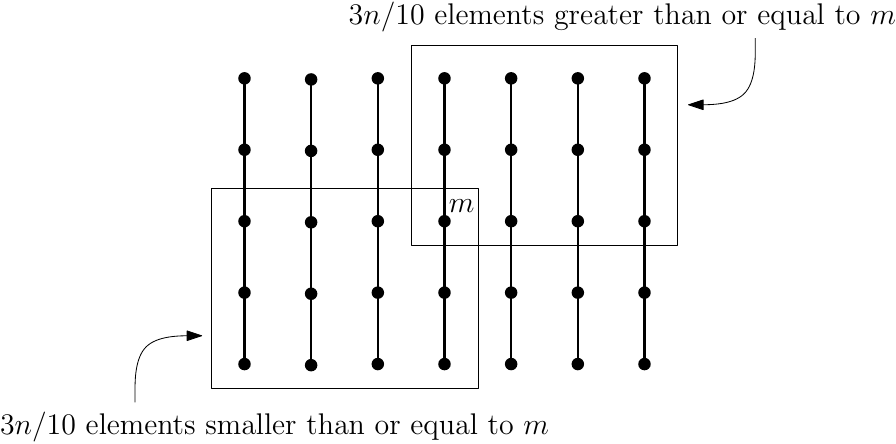}
\caption{One iteration of the \select algorithm with group size $5$.
At least $3n/10$ elements can be discarded.}
\label{fig:group5}
\end{figure}
Denote the worst case running time 
of the recursive selection algorithm on an $n$-element input by $T(n)$.
As shown in Figure~\ref{fig:group5}, at least $3*(n/5)/2=3n/10$
elements are discarded at each iteration, which yields the recurrence
\begin{equation}\label{eq:5}
  T(n)\leq T(n/5)+T(7n/10)+O(n).
\end{equation}
This recurrence is one of the following generic form:
\begin{equation} \label{eq:generic}
  T(n) \leq \sum^k_{i=1} T(a_i \, n) + O(n), \text{ where } a_i>0 \text{ for }
  i=1,\ldots,k \text{ and } \sum_{i=1}^k a_i \leq 1. 
\end{equation}
It is well-known~\cite[Ch.~4]{CLR+09}
(and can be verified by direct substitution) that the solution of~\eqref{eq:generic} is
\begin{equation}\label{eq:sol:generic}
T(n) = 
\begin{cases}
O(n) & \text{if } \sum_{i=1}^k a_i<1,\\
O(n\log n) & \text{if } \sum_{i=1}^k a_i=1.
\end{cases}
\end{equation}

As such, since the coefficients in~\eqref{eq:5} sum to $1/5+7/10=9/10<1$,
we see that the original \select algorithm with group size $5$ runs in $T(n)=\Theta(n)$
(as it is well-known).

\section{The Repeated Step Algorithm} \label{sec:repeat}

Using group size $3$ directly in the \select algorithm in \cite{BFP+73} yields
\begin{equation} \label{eq:4}
T(n)\leq T(n/3)+T(2n/3)+O(n), 
\end{equation}
which solves to $T(n)=O(n\log n)$.
Here a large portion (at least one third) of $A$ is discarded in each iteration 
but the cost of finding such a good pivot is too high, namely $T(n/3)$.
The idea of our \emph{repeated step} algorithm, inspired by the
algorithm in~\cite{BCC+00}, is to find a weaker
pivot in a faster manner by performing the operation 
``group by $3$ and find the medians'' twice in a row  
(as illustrated in Figure~\ref{fig:group3}).
It is worth noting that this method is akin to using the Tukey's ninther~\cite{Tu78}.
More precisely, $M'$ as defined in step~\ref{alg3:step2b}  below is the sequence formed
by the Tukey's ninthers of groups of $9$ elements in $A$.

\paragraph{Algorithm}
\begin{enumerate} \itemsep 1pt
\item
\label{alg3:step1}
If $n\leq 3$, sort $A$ and return the $i$th smallest number.
\item
\label{alg3:step2}
Arrange $A$ into groups of size $3$.
Let $M$ be the sequence of medians of these $n/3$ groups.
\item \label{alg3:step2b}
Arrange $M$ into groups of size $3$.
Let $M'$ be the sequence of medians of these $n/9$ groups.
\item \label{alg3:step3}
Select the median of $M'$ recursively, let it be $m$.
\item
Partition $A$ into two subsequences $A_1=\{x|x<m\}$ and $A_2=\{x|x>m\}$.
If $i=|A_1|+1$, return $m$. 
If $i<|A_1|+1$, go to step~\ref{alg3:step1} with
$A \leftarrow A_1$ and $n\leftarrow |A_1|$. 
If $i>|A_1|+1$, go to step~\ref{alg3:step1} with $A \leftarrow A_2$,
$n\leftarrow |A_2|$ and $i\leftarrow i-|A_1|-1$.
\end{enumerate}
\begin{figure}[htbp]
\centering
\includegraphics[scale=0.6]{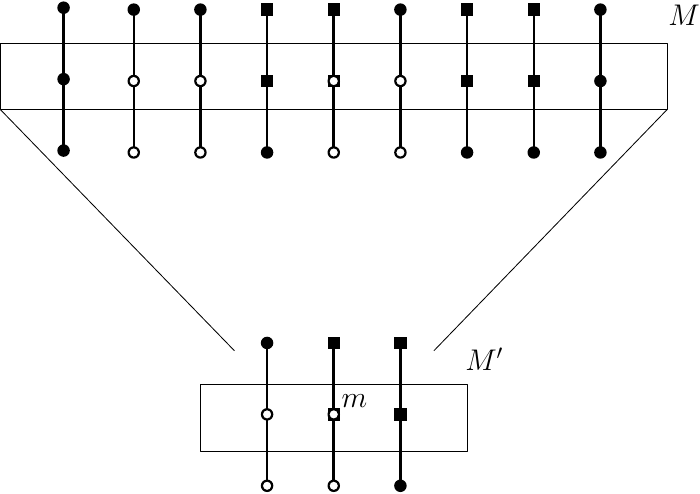}
\caption{One iteration of the \emph{repeated step} algorithm with groups of $3$. 
Empty disks represent elements that are guaranteed to be smaller than or equal to $m$.
Filled squares represent elements that are guaranteed to be greater than or equal to $m$.}
\label{fig:group3}
\end{figure}

\paragraph{Analysis.}
Since elements are discarded if and only if they are too large or too small to be
the $i$th smallest element, the correctness of the algorithm is implied. 
Regarding the time complexity of this algorithm, we have the following lemma:

\begin{lemma} \label{lem:1}
The repeated step algorithm with groups of $3$ runs in $\Theta(n)$
time on an $n$-element input. 
\end{lemma}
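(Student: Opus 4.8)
The plan is to first quantify the quality of the pivot $m$ produced in steps~\ref{alg3:step2}--\ref{alg3:step3}, and then to feed the resulting discard fraction into a standard recurrence. The pivot $m$ is the median of $M'$, so roughly half of the $n/9$ elements of $M'$ are at most $m$; that is, at least $n/18$ elements $m''$ of $M'$ satisfy $m'' \le m$. I would then propagate this bound down the two levels of grouping. Each such $m'' \in M'$ is the median of a group of three elements of $M$, so at least two of those three elements of $M$ are $\le m'' \le m$. In turn, each of these elements of $M$ is the median of a group of three elements of $A$, so at least two of those three elements of $A$ are $\le m$. Multiplying through the two levels, each of the $n/18$ qualifying elements of $M'$ certifies $2\cdot 2 = 4$ elements of $A$ that are $\le m$, for a total of at least $4 \cdot n/18 = 2n/9$ elements of $A$ guaranteed to be at most $m$ (the empty disks in Figure~\ref{fig:group3}). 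By the symmetric argument, at least $2n/9$ elements are guaranteed to be at least $m$.

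Consequently, whichever side the recursion descends into at the partition step, at least $2n/9$ elements are discarded, so the surviving subsequence has size at most $n - 2n/9 = 7n/9$. Adding the cost $T(n/9)$ of the recursive call that selects the median of $M'$ and the $O(n)$ cost of grouping, finding group medians, and partitioning, I obtain the recurrence
\[
T(n) \le T(n/9) + T(7n/9) + O(n).
\]

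To solve it, I would verify by induction that $T(n) \le cn$ for a suitable constant $c$: since the fractions of the two recursive calls sum to $1/9 + 7/9 = 8/9 < 1$, the inductive step closes as long as $c$ dominates the hidden linear constant divided by $1 - 8/9 = 1/9$. This gives $T(n) = O(n)$, and since every element must be examined at least once, $T(n) = \Omega(n)$, hence $T(n) = \Theta(n)$. The only genuinely delicate point is the two-level counting that yields the $2n/9$ discard bound; everything else is the routine ``coefficients sum to less than one'' argument already used for group size $5$ in Section~\ref{sec:prelim}. I would also note that the floors and ceilings suppressed by the simplifying assumptions perturb these fractions only by lower-order amounts and do not affect the conclusion.
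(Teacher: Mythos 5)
Your proposal is correct and follows essentially the same route as the paper: the two-level counting yields the same $2n/9$ discard bound (the paper counts level by level, $n/18$ elements of $M'$ certifying $n/9$ elements of $M$ certifying $2n/9$ elements of $A$, which is your $4\cdot n/18$), leading to the identical recurrence $T(n)\leq T(n/9)+T(7n/9)+O(n)$ with coefficient sum $8/9<1$.
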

\begin{proof}
By finding the median of medians of medians instead of the median of medians, 
the cost of selecting the pivot $m$ reduces from $T(n/3)+O(n)$ to $T(n/9)+O(n)$.
We need to determine how well $m$ partitions $A$ in the worst case.
In step~\ref{alg3:step3}, $m$ is guaranteed to be greater than or equal to
$2*(n/9)/2=n/9$ elements in $M$. 
Each element in $M$ is a median of a group of size $3$ in $A$, so it is
greater than or equal to 2 elements in its group. 
All the groups of $A$ are disjoint, thus $m$ is greater than or equal to $2n/9$ elements
in $A$. Similarly, $m$ is smaller than or equal to $2n/9$ elements in $A$.
Thus, in the last step, at least $2n/9$ elements can be discarded.
The recursive call in step~\ref{alg3:step3} takes $T(n/9)$ time.
So the resulting recurrence is
$$ T(n)\leq T(n/9)+T(7n/9)+O(n), $$ 
and since the coefficients on the right side sum to $8/9<1$, by~\eqref{eq:sol:generic},
we have $T(n)=\Theta(n)$, as required.
\end{proof}
Note that grouping by $3$ twice and finding the median of medians of
medians is different from grouping by $9$ and finding the median of medians.
The number of comparisons required for grouping by $3$ twice is
$3n/3+3n/9=12n/9$, while for grouping by $9$ the number is $14n/9$
($14$ comparisons for selecting the median of $9$).
The number of elements guaranteed to be discarded is also different:
for grouping by $3$ twice, at least $2n/9$ elements can be discarded, while
for grouping by $9$, this number is $5n/18$.
So our method trades some of the quality of the pivots for speed (discards fewer elements
than the median of $9$ approach) by doing fewer comparisons.

\section{The Shifting Target Algorithm} \label{sec:shift}

In the \select algorithm introduced in \cite{BFP+73}, the group size is restricted
to odd numbers, where the median of a group has a privileged symmetric position.
For group size $4$, depending on the choice of upper, lower, or average median, 
there are three possible partial orders to be considered (see Figure~\ref{fig:partial}). 
\begin{figure}[htbp]
\centering
\includegraphics[scale=0.6]{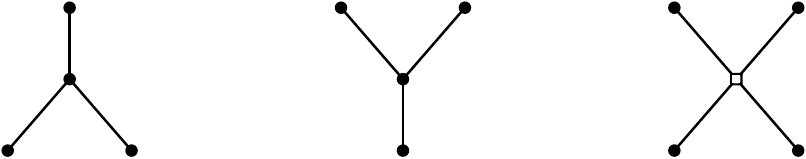}
\caption{Three partial orders of $4$ elements based on the upper (left),
  lower (middle), and average (right) medians. 
The empty square represents the average of the upper and lower
median, which is not necessarily part of the $4$-element sequence.} 
\label{fig:partial}
\end{figure}

If the upper (or lower) median is always used, only $2*(n/4)/2=n/4$ elements are guaranteed  
to be discarded in each iteration (see Figure~\ref{fig:group4_lower}),
which gives the recurrence 
\begin{equation} \label{eq:1}
T(n) \leq T(n/4)+T(3n/4)+O(n).
\end{equation}
The term $T(n/4)$ is for the recursive call to find the median of all $n/4$ medians.
This recursion solves to $T(n)=O(n\log n)$.
Even if we use the average of the two medians, the recursion remains the same
since only $2$ elements from each of the $(n/4)/2=n/8$ groups are
guaranteed to be discarded. 

Observe that if the target parameter satisfies $i \leq n/2$ (resp., $i \geq n/2$),
using the lower (resp., upper) median gives a better chance to discard
more elements and thus obtain a better recurrence; 
detailed calculations are given in the proof of Lemma~\ref{lem:2}. 
Inspired by this idea, we propose the \emph{shifting target} algorithm
as follows:
\paragraph{Algorithm}
\begin{enumerate} \itemsep 2pt
\item
\label{alg4:step1}
If $n\leq 4$, sort $A$ and return the $i$th smallest number.
\item
Arrange $A$ into groups of size $4$.
Let $M$ be the sequence of medians of these $n/4$ groups.
If $i\leq n/2$, the lower medians are used; otherwise the upper medians are used.
Select the median of $M$ recursively, let it be $m$.
\item
Partition $A$ into two subsequences $A_1=\{x|x<m\}$ and $A_2=\{x|x>m\}$.
If $i=|A_1|+1$, return $m$. 
If $i<|A_1|+1$, go to step~\ref{alg4:step1} with
$A \leftarrow A_1$ and $n\leftarrow |A_1|$. 
If $i>|A_1|+1$, go to step~\ref{alg4:step1} with $A \leftarrow A_2$,
$n\leftarrow |A_2|$ and $i\leftarrow i-|A_1|-1$.
\end{enumerate}

\paragraph{Analysis.}
Regarding the time complexity, we have the following lemma.
\begin{lemma} \label{lem:2}
The shifting target algorithm with group size $4$ runs in $\Theta(n)$
time on an $n$-element input. 
\end{lemma}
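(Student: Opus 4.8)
The plan is to reuse the recurrence-based strategy of Lemma~\ref{lem:1}, but to track how the target parameter migrates across iterations. By symmetry it suffices to analyze an iteration with $i\le n/2$, in which the lower medians are used; the case $i>n/2$ is identical after interchanging the roles of the upper/lower medians and of $A_1$/$A_2$. With lower medians, a routine count on the partial order of Figure~\ref{fig:group4_lower} shows that at least $3n/8$ elements of $A$ are $\ge m$ and at least $n/4$ elements are $\le m$. I will call an iteration \emph{good} if the recursion continues on the side of size at most $5n/8$ (at least $3n/8$ elements discarded) and \emph{bad} if it continues on the side where only at least $n/4$ elements are discarded (size at most $3n/4$); in the regime $i\le n/2$ these are, respectively, recursing on $A_1$ and on $A_2$.

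First I would isolate the elementary bookkeeping fact that recursing on $A_2$ leaves $n-i$ unchanged (since $|A_2|-(i-|A_1|-1)=n-i$), while recursing on $A_1$ leaves $i$ unchanged. The heart of the proof is then the claim that \emph{at most three bad iterations can occur consecutively}. Consider a maximal run of bad iterations; throughout it $j:=n-i$ is invariant and equals its initial value, which is at least $n_0/2$, so every intermediate size satisfies $n_k=i_k+j\ge n_0/2$. Each bad step discards at least $n_k/4\ge n_0/8$ elements, all smaller than the current target, and hence decreases $i$ by at least $n_0/8$; since $i$ starts at $i_0\le n_0/2$ and must remain positive, the run has length at most three. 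A short companion check (using $i_k\le n_k/2$ and $n_{k+1}\ge n_0/2$) confirms that a bad step keeps the new target below half the new size, so the run stays within the lower-median regime and the counting is not disrupted by a change of regime.

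Consequently every window of four consecutive iterations contains at least one good iteration. I would unroll four iterations into a single recurrence, taking the worst case of three bad steps followed by a good one, which maximizes both the surviving size and the sizes on which the pivots are computed: the four median-of-medians selections act on inputs of size $n/4,\ 3n/16,\ 9n/64,\ 27n/256$, and after the guaranteed good step the surviving subproblem has size at most $\tfrac{5}{8}\cdot\tfrac{27}{64}\,n=135n/512$. This yields
\[
T(n)\le T(n/4)+T(3n/16)+T(9n/64)+T(27n/256)+T(135n/512)+O(n),
\]
whose coefficients sum to $485/512<1$, so the recursion solves to $T(n)=\Theta(n)$ exactly as in the earlier analyses. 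Since any actual window with fewer bad steps produces only smaller subproblems, and $T$ may be taken nondecreasing, this worst-case recurrence is a valid upper bound.

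The main obstacle is the second paragraph: proving cleanly that bad iterations cannot chain more than three times. The delicate points are (i) identifying the correct invariant, namely $n-i$ under right-recursion, and exploiting $i_0\le n_0/2$ to cap the run length, and (ii) verifying that a bad step preserves $i\le n/2$, so that one does not inadvertently switch to the upper-median regime mid-run and invalidate the invariant. Once this combinatorial claim is secured, the remainder is the standard ``coefficients sum to less than one'' computation; I would also keep an eye on the floor/ceiling operations and the precise definition of the median of $M$ to ensure the $3n/8$ and $n/4$ guarantees survive the rounding that the simplifying assumptions suppress.
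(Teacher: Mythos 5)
Your proof is correct, but it reaches the key combinatorial fact by a different route than the paper. The paper tracks the \emph{ratio} $i/n$ across consecutive ``bad'' iterations: starting from $i/n\le 1/2$, one bad step forces $i'/n'\le 1/3$, a second forces $i''/n''\le 1/9<1/4$, and once $i\le n/4$ the pivot's rank (at least $n/4$) guarantees a good step; hence at most \emph{two} bad iterations occur in a row, the paper unrolls only three levels, and the worst coefficient sum is $119/128$. You instead use the invariant that right-recursion preserves $n-i$, so all sizes in a bad run stay above $n_0/2$, and each bad step eats at least $n_0/8$ off the target index; positivity of $i$ then caps the run at three bad steps, you unroll four levels, and get $485/512<1$. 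Your argument is a bit more robust (it needs only $i\ge 1$, not the observation that $i\le n/4$ forces a good step) at the price of a weaker structural bound and one extra unrolled level; both yield linear time, and your companion check that a bad step preserves $i\le n/2$ plays exactly the role of the paper's computation $1/4<i'/n'\le 1/3\le 1/2$. One small imprecision: your closing claim that ``any actual window with fewer bad steps produces only smaller subproblems'' is not literally true term by term (a window with zero bad steps leaves a surviving piece of size $5n/8>135n/512$); the correct justification, which the paper carries out explicitly via \eqref{eq:2} and \eqref{eq:3}, is that each of the finitely many window types has coefficient sum below $1$ --- namely $7/8$, $29/32$, $119/128$, $485/512$ --- so the induction $T(n)\le Cn$ goes through in every case. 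This is easily repaired and does not affect the validity of your proof.
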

\begin{proof}
We shall prove that in at most three consecutive iterations, the size of the problem is reduced
by a large enough fraction so that the resulting recurrence is of the form in~\eqref{eq:generic}
with $\sum_{i=1}^k a_i<1$. 
\begin{figure}[htbp]
\centering
\includegraphics[scale=0.6]{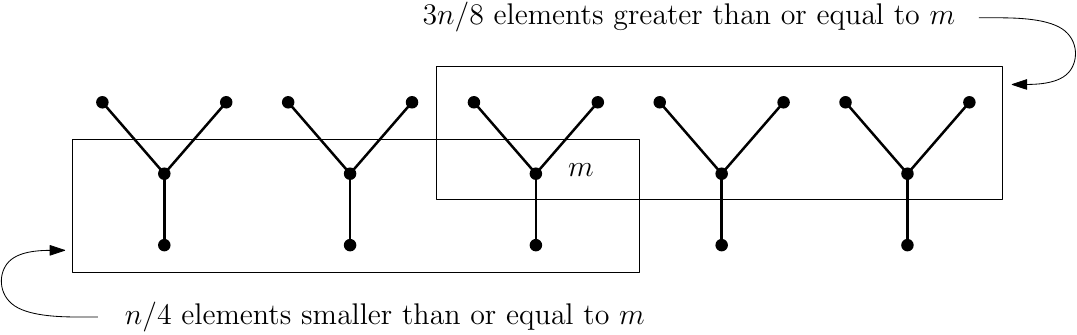}
\caption{Group size $4$ with lower medians used.}
\label{fig:group4_lower}
\end{figure}

If in some iteration, we have $i\leq n/4$, then the lower medians are used.
Recall that $m$ is guaranteed to be greater than or equal to $2*(n/4)/2=n/4$ elements of $A$. 
So either $m$ is the $i$th smallest element in $A$ or at least $3*(n/4)/2=3n/8$ largest elements
are discarded (see Figure~\ref{fig:group4_lower}), hence the worst-case running time recurrence is
\begin{equation} \label{eq:2}
T(n)\leq T(n/4)+T(5n/8)+O(n).
\end{equation} 
Observe that in this case the coefficients on the right side sum to
$7/8<1$, yielding a linear solution, as required.

Now consider the case $n/4<i\leq n/2$, again the lower medians are used.
If $|A_1| \geq i$, \ie, the rank of $m$ is higher than $i$, again at least
$3*(n/4)/2=3n/8$ largest elements are discarded and~\eqref{eq:2} applies.
Otherwise, suppose that only $t =|A_1| \geq 2*(n/4)/2=n/4$ smallest
elements are discarded. Then in the next iteration, $i'=i-t$, $n'=n-t$.

If $i'\leq n'/4$, at least $3n'/8$ elements are discarded.
The first iteration satisfies recurrence~\eqref{eq:1} and we 
can use recurrence~\eqref{eq:2} to bound the term $T(3n/4)$ from above.
We deduce that in two iterations the worst case running time satisfies the recurrence: 
\begin{align} \label{eq:3}
T(n)&\leq T(n/4)+T(3n/4)+O(n) \nonumber \\
&\leq T(n/4)+ T((3n/4)/4) + T((3n/4)*5/8) + O(n)  \nonumber \\
&= T(n/4)+T(3n/16)+T(15n/32)+O(n).
\end{align}
Observe that the coefficients on the right side sum to $29/32<1$, yielding a linear solution,
as required. Subsequently, we can therefore assume that $i' \geq n'/4$. We have
\begin{align*}
i'/n'&=(i-t)/(n-t) \leq(i-n/4)/(n-n/4)\\
&\leq(n/2-n/4)/(n-n/4) =1/3.
\end{align*}
Since $1/4<i'/n'\leq 1/3 \leq 1/2$, the lower medians will be used.
As described above, if at least $3n'/8$ largest elements are discarded, in two iterations,
the worst case running time satisfies the same recurrence~\eqref{eq:3}.

So suppose that only $t' \geq 2*(n'/4)/2=n'/4$ smallest elements are discarded.
Let $i''=i'-t'$, $n''=n'-t'$. We have
\begin{align*}
i''/n''&=(i'-t')/(n'-t') \leq(i'-n'/4)/(n'-n'/4)\\
&\leq(n'/3-n'/4)/(n'-n'/4) =1/9.
\end{align*}
Since $i''/n'' \leq 1/9 <1/4$, in the next iteration, at least $3n''/8$ elements will
be discarded. The first two iterations satisfy recurrence~\eqref{eq:1} and we 
can use recurrence~\eqref{eq:2} to bound the term $T(9n/16)$ from above.
We deduce that in three iterations the worst case running time satisfies the recurrence: 
\begin{align*}
T(n)&\leq T(n/4)+T(3n/4)+O(n)\\
&\leq T(n/4)+ T((3n/4)/4) + T((3n/4)*3/4) + O(n)\\
&= T(n/4)+T(3n/16)+T(9n/16)+O(n)\\
&\leq T(n/4)+T(3n/16)+ T((9n/16)/4) + T((9n/16)*5/8) + O(n)\\
&= T(n/4)+T(3n/16)+T(9n/64)+T(45n/128)+O(n).
\end{align*}
The sum of the coefficients on the right side is $119/128<1$,
so again by~\eqref{eq:sol:generic}, the solution is $T(n)=\Theta(n)$.

By symmetry, the analysis also holds for the case $i \geq n/2$, and the proof
of Lemma~\ref{lem:2} is complete. 
\end{proof}

\section{The Hyperpair Algorithm}\label{sec:hyperpair}

For completeness, we consider the ultimate group size $2$, \ie, each group
contains a pair of elements. The upper (resp. lower) median of a pair is the larger
(resp. smaller) element in that pair.
In the original \select algorithm, if pairs were used, only $1*(n/4)$ elements are
guaranteed to be discarded in each iteration, which gives the recurrence
\begin{equation}
T(n)\leq T(n/2)+T(3n/4)+O(n).
\end{equation}

The term $T(n/2)$ is for the recursive call to find the median of the $n/2$ upper 
(or lower) medians. However, the above recursion does not yield a solution linear in $n$.
Now, one can make the following adjustment: instead of taking the median of half the input 
recursively, let the algorithm recursively compute
the $j$th smallest element among the $n/2$ upper medians, where $j=n/6$. 
Then $2j=n/2-j=n/3$ elements can be discarded in each iteration, thus the size of the
largest remaining recursive call is $n-n/3=2n/3$. However, even with this adjustment,
the resulting recurrence~\eqref{eq:hp2} does not yield a solution linear in $n$.
\begin{equation} \label{eq:hp2}
T(n)\leq T(n/2)+T(2n/3)+O(n).
\end{equation}

The key for obtaining a linear running time in this setting
seems to be to use groups of $2$ in a repeated manner.
The following algorithm has the same flavor as the repeated step algorithm in 
section~\ref{sec:repeat} but uses group size $2$.
Its name, the \emph{hyperpair} algorithm, will be justified in the analysis.

\paragraph{Algorithm}
\begin{enumerate} \itemsep 1pt
\item
\label{alg2:step1}
If $n\leq 2$, sort $A$ and return the $i$th smallest number.
\item
\label{alg2:step2}
Arrange $A$ into groups of size $2$.
Let $M_1$ be the sequence of upper medians of these $n/2$ pairs.
\item
\label{alg2:step3}
Arrange $M_1$ into pairs.
Let $M_2$ be the sequence of lower medians of these $n/4$ pairs.
\item
\label{alg2:step4}
Arrange $M_2$ into pairs.
Let $M_3$ be the sequence of upper medians of these $n/8$ pairs.
\item
\label{alg2:step5}
Arrange $M_3$ into pairs.
Let $M_4$ be the sequence of lower medians of these $n/16$ pairs.
\item
Select the median of $M_4$ recursively, let it be $m$.
\item
Partition $A$ into two subsequences $A_1=\{x|x<m\}$ and $A_2=\{x|x>m\}$.
If $i=|A_1|+1$, return $m$. 
If $i<|A_1|+1$, go to step~\ref{alg2:step1} with
$A \leftarrow A_1$ and $n\leftarrow |A_1|$. 
If $i>|A_1|+1$, go to step~\ref{alg2:step1} with $A \leftarrow A_2$,
$n\leftarrow |A_2|$ and $i\leftarrow i-|A_1|-1$.
\end{enumerate}

\paragraph{Analysis.}
In order to calculate the time complexity of this algorithm, we need to estimate how
well $m$ partitions the sequence $A$.
Observe that steps~\ref{alg2:step2}--\ref{alg2:step5} can be viewed as constructing
\emph{hyperpairs}, as in the non-recursive selection algorithm of
Sch\"{o}nhage~\etal~\cite{SPP76}. In their definition,
a single element is a hyperpair with itself as the \emph{center};
given two disjoint copies of a hyperpair, we can combine them to form a larger
hyperpair by comparing their centers and taking the upper or lower of these
as the new center. The hyperpairs $P$ constructed in our algorithm are illustrated
in Figure~\ref{fig:group2}.
\begin{figure}[htbp]
\centering
\includegraphics[scale=0.5]{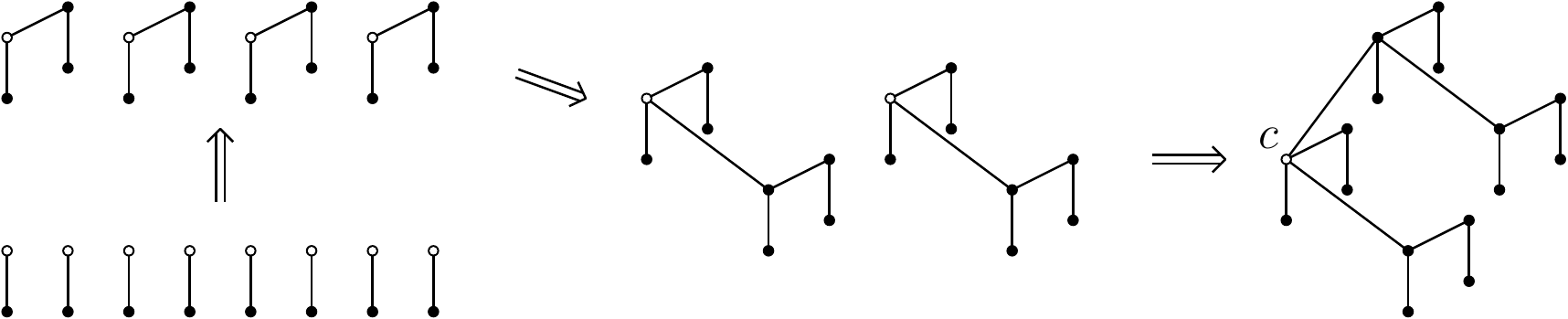}
\caption{Construction of a hyperpair $P$ with $16$ elements; the center of each
hyperpair is marked by an empty circle.}
\label{fig:group2}
\end{figure}
Observe that in $P$, three elements are guaranteed to be greater than its center $c$ and 
three are guaranteed to be smaller than $c$. We are now ready to establish the time complexity
of this algorithm:

\begin{lemma} \label{lem:hyperpair}
The hyperpair algorithm runs in $\Theta(n)$
time on an $n$-element input. 
\end{lemma}
\begin{proof}
Steps~\ref{alg2:step2}--\ref{alg2:step5} take $n/2+n/4+n/8+n/16=15n/16$ comparisons to form
the hyperpairs $P$.
The pivot $m$ is the median of the centers of these $n/16$ hyperpairs.
So the cost of selecting the pivot is $T(n/16)+15n/16$.
By the above observation about the center $c$ of $P$, $m$ is guaranteed to be greater than 
or equal to $4*(n/16)/2=n/8$ elements in $A$. 
Similarly, $m$ is guaranteed to be smaller than or equal to $n/8$ elements in $A$.
Thus, in the last step, at least $n/8$ elements can be discarded.
The resulting recurrence is
$$ T(n)\leq T(n/16)+T(7n/8)+O(n), $$ 
and since the coefficients on the right side sum to $15/16<1$, by~\eqref{eq:sol:generic},
we have $T(n)=\Theta(n)$, as required.
\end{proof}

Note that larger hyperpairs can also be used to obtain linear-time algorithms.
If the ``group into pairs''
step is repeated $2k$ times, $k\geq 2$, where upper and lower medians are used alternatively,
then $n/2^{2k}$ hyperpairs of size $2^{2k}$ are built. Each center is guaranteed to be greater 
than or equal to $2^k$ elements in its hyperpair and is also guaranteed to be smaller than or 
equal to $2^k$ elements in its hyperpair. So using the median of these centers as pivot, at least 
$2^k*\left(n/2^{2k}\right)/2=n/2^{k+1}$ elements can be discarded. The resulting recurrence is
\[
T(n)\leq T\left(n/2^{2k}\right)+T\left(\left(1-1/2^{k+1}\right)n\right)+O(n),
\]
where the $O(n)$ term involves $\sum_{j=1}^{2k} n/2^j =n - n/2^{2k}$ comparisons to build
the hyperpairs and at most $n$ comparisons to partition the sequence. 
Since the coefficients on the right side sum to $1-\left(2^{k-1}-1\right)/2^{2k}<1$, 
by~\eqref{eq:sol:generic}, we have $T(n)=\Theta(n)$.

\section{Other Variants}\label{sec:variants}

A similar idea of repeating the group step (from Section~\ref{sec:repeat})
also applies to the case of groups of $4$ and yields 
$$ T(n)\leq T(n/16)+T(7n/8)+O(n),$$ 
and thereby another linear time selection algorithm with group size $4$. 

\paragraph{A hybrid algorithm.}
Yet another variant of \select with group size $4$ (we refer to it as
the hybrid algorithm), can be obtained by using the ideas of both
algorithms together, \ie, repeat the grouping by $4$ step twice
in a row while $M$ contains the lower medians and $M'$ contains the
upper medians (or vice versa). 
Recursively selecting the median $m$ of $M'$ takes time $T(n/16)$.
Notice that $m$ is greater than or equal to $3*(n/16)/2=3n/32$ elements in $M$ 
of which each is greater than or equal to 2 elements in its group in $A$.
So $m$ is greater than or equal to $3n/16$ elements of $A$.
Also, $m$ is smaller than or equal to $2*(n/16)/2=n/16$ elements in $M$
of which each is smaller than or equal to $3$ elements in its group of $A$.
So $m$ is smaller than or equal to $3n/16$ elements of $A$, thus
the resulting recurrence is 
$$T(n)\leq T(n/16)+T(13n/16)+O(n), $$
again with a linear solution, as desired. 

\paragraph{Zwick's variant.}
The fact that the \select algorithm can be modified so that it works
with groups of $4$ in linear time was observed prior to this writing. 
The following variant, from 2010, is due to Zwick~\cite{Z14}. 
Split the elements of $A$ into quartets.
Find the $2$nd smallest element of each quartet (\ie, the lower median), and let 
$M$ be this subset of $n/4$ elements.
Recursively find the $(3/5)(n/4)$th smallest element $m$ of $M$. 
Now $(3/5)(n/4)$ groups of $A$ have $2$ elements smaller than or equal to $m$, 
so $m$ is greater than or equal to $2(3/5)(n/4)=3n/10$ elements in
$A$. Similarly, $(2/5)(n/4)$ groups of $A$ have $3$ elements greater than or equal to $m$, 
so $m$ is smaller than or equal to $3(2/5)(n/4)=3n/10$ elements in $A$. 
Thus, the remaining recursive call involves at most $7n/10$ elements,
and the resulting recurrence is 
$$T(n) \leq T(n/4) + T(7n/10) + O(n). $$
Since $1/4+7/10<1$, the solution is linear.

\section{Comparison of the Algorithms and Experimental Results}
\label{sec:experiment}

To compare our algorithms with the original \select algorithm,
we first derive upper bounds on the exact numbers of comparisons for each
variant in the same manner as in Section~2 of~\cite{BFP+73}.
It should be noted that all recurrent formulas and all proofs do not provide 
(nor aim to provide) tight bounds or expected number of comparisons. 
Tighter analytical bounds might exist than those shown.
Let now $T(n)$ denote the total number of comparisons performed. 
For the original \select algorithm with group size $5$, we have
$$T(n)\leq T(n/5)+T(7n/10)+6n/5+n,$$
in which the term $6n/5$ is for computing the $n/5$ medians (each takes at most $6$ comparisons)
and the term $n$ is for partitioning the sequence around the selected pivot.
Solving the recurrence yields $T(n) \leq 22n$.
Similarly, for the repeated step algorithm, we have
$$T(n)\leq T(n/9)+T(7n/9)+3n/3+3n/9+n,$$
and consequently, $T(n) \leq 21n$.
For the hybrid algorithm, we have
$$T(n) \leq T(n/16) + T(13n/16)+ 4n/4 +4n/16 + n,$$
and consequently, $T(n)\leq 18n$.
For Zwick's algorithm, we have
$$T(n)\leq T(n/4)+T(7n/10)+4n/4+n,$$
and consequently, $T(n)\leq 40n$.
For the hyperpair algorithm, we have
$$T(n)\leq T(n/16)+T(7n/8)+15n/16+n,$$
and consequently, $T(n)\leq 31n$.
For the shifting target algorithm, the analysis is more involved; 
it yields $T(n) \leq 66n$.

\begin{center}
\begin{table} [hbtp]
\begin{tabular}{|c|c|c|c|c|c|c|c|}
\hline
\multirow{2}{*}{Algorithm} & \multirow{2}{*}{Group} & \multirow{2}{*}{Upper Bound}
& \multirow{2}{*}{Average Time} & \multicolumn{2}{|c|}{Comparisons} & \multicolumn{2}{|c|}{Swaps}\\
\cline{5-8}
& & & & Average & Max & Average & Max\\
\hline\hline
Hybrid & 4 & $18n$ & 364.3ms & 4.1 & 4.2 & 1.2 & 1.2\\
\hline
Repeated step& 3 & $21n$ & 446.9ms & 4.3 & 4.4 & 1.8 & 1.8\\
\hline
Original & 5 & $22n$ & 468.9ms & 5.7 & 5.8 & 1.5 & 1.5\\
\hline
Hyperpair(4) & 2 & $31n$ & 480.6ms & 2.9 & 2.9 & 3.0 & 3.0\\
\hline
Zwick's & 4 & $40n$ & 541.1ms & 6.3 & 6.3 & 2.0 & 2.0\\
\hline
Shifting target & 4 & $66n$ & 558.0ms & 6.6 & 6.7 & 2.0 & 2.1\\
\hline
Original & 4 & $O(n\log n)$ & 560.2ms & 6.7 & 6.7 & 2.0 & 2.0\\
\hline
Original & 3 & $O(n\log n)$ & 813.4ms & 8.2 & 8.5 & 3.4 & 3.5\\
\hline
\hline
Hyperpair(6) & 2 & $127n/3$ & 452.4ms & 2.8 & 2.8 & 2.8 & 2.8\\
\hline
Hyperpair(8) & 2 & $73n$ & 456.0ms & 2.8 & 2.8 & 2.8 & 2.9\\
\hline
Hyperpair(10) & 2 & $2047n/15$ & 458.8ms & 2.9 & 2.9 & 2.9 & 2.9\\
\hline
\end{tabular}
\caption{Experimental results. 
The last four columns are values per element.
The numbers in parentheses for the hyperpair algorithms indicate
the numbers of times the ``group into pairs'' step is repeated. 
The ``Upper Bound'' column shows the leading term in the solution of the corresponding
recurrence for the worst-case number of comparisons.}
\label{tab:exp}
\end{table}
\end{center}

\vspace{-7mm}
We note that sharper upper bounds are possible by taking extra care in avoiding comparisons
with known outcomes against the pivot; however, for simplicity of implementation 
we opted to forego this saving.
In order to avoid the overhead of repeated array copying, 
all the algorithms were implemented in-place, in the sense that,
with the exception of the recursion, only $O(1)$ extra space 
is used in addition to the input array. This requires minor
modifications of the algorithms; however, their running time analyses
remain unchanged. 
We carried out 1000 experiments\footnote{The experiments
were performed on a desktop with 64bits operating system, 7.8GB memory and 
Intel\textregistered\;Core\texttrademark\;i7-2600 3.4GHz processor.
The C code used can be downloaded at
\url{https://drive.google.com/file/d/0B7USj6ZPkysnMjNwV014RDJGMWc/view?usp=sharing}.}
on selecting medians in arrays of 10 million randomly permuted distinct integers.
The results are summarized in Table~\ref{tab:exp}.

We observed that the experimental results agree with the worst-case estimates in the
number of comparisons, in the sense that they show roughly the same speed ranking.
One reason why the experimental speed ranking does not fully match the 
analytical bounds derived is the existence of other operations performed during
the selection process that are unaccounted for by the recurrences, such as data copying
(shown in the last two columns of the table as swaps).
It is worth noting that optimizations introduced in Section~3 of~\cite{BFP+73},
or others discussed in~\cite{Al17},
may be used to reduce the multiplicative constant factors.

\section{Conclusion} \label{sec:conclusion}

The question whether the original selection algorithm introduced
in~\cite{BFP+73} (outlined in Section~\ref{sec:prelim}) 
runs in linear time with group size $3$ and $4$ remains unsettled.
Although the recurrences 
\begin{align*}
T(n)&\leq T(n/3)+T(2n/3)+O(n)\text{, and}\\
T(n)&\leq T(n/4)+T(3n/4)+O(n)
\end{align*}
(see~\eqref{eq:4} and~\eqref{eq:1}) for its worst-case running time
with these group sizes both solve to $T(n)=O(n\log n)$,  
we believe that they only give non-tight upper bounds on the worst case scenarios.
In any case and against popular belief we think that $\Theta(n \log{n})$ is \emph{not}
the answer in regard to the time complexity of selection with these group sizes:
\begin{conjecture}
The \select algorithm introduced by Blum~\etal~\cite{BFP+73} 
runs in $o(n \log{n})$ 
time with groups of $3$ or $4$. 
\end{conjecture}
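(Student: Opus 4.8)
The plan is to attack the conjecture by replacing the branch-independent recurrence~\eqref{eq:4} (and its group-$4$ analogue~\eqref{eq:1}) with an amortized analysis that \emph{couples} the two recursive calls. The source of slack is precisely the phenomenon identified in the introduction: the median sequence $M$ is a subset of $A$, so the cost $T(n/3)$ of locating the pivot and the quality of the partition of $A$ cannot be chosen adversarially and independently. The naive bound charges $\Theta(n)$ work at each of the $\Theta(\log n)$ levels because the subproblem sizes sum to $n$ at every level (as $1/3+2/3=1$); to obtain $o(n\log n)$ it suffices to show that the adversary cannot keep \emph{both} the pivot-finding recursion maximally expensive \emph{and} the discard fraction minimal across all levels simultaneously.

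Concretely, I would carry the algorithm's accumulated comparison information as an explicit invariant: at the start of each iteration the current array carries a partial order inherited from the groupings and partitions of earlier iterations, since the within-group comparisons and the pivot comparisons are never undone by the order-preserving split. I would introduce a potential $\Phi$ measuring this accumulated structure and prove a trade-off lemma of the following shape: in any window of a constant number $k$ of consecutive iterations along a root-to-leaf path, either the total pivot-selection cost incurred is a vanishing fraction of what~\eqref{eq:4} charges, or the problem size has contracted by strictly more than the worst-case factor $(2/3)^k$ permitted by~\eqref{eq:4}. The mechanism is the centering guarantee used in Lemma~\ref{lem:1}: a pivot whose own median-of-medians recursion is deep is, by the same structural counting, well-centered inside a large already-ordered portion of $A$, and a well-centered pivot discards more than the bare $n/3$. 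Summing the amortized per-iteration cost, charged against the decrease in size together with $\Delta\Phi$, the level sums would then decay once enough structure has accumulated, and the total would telescope below $n\log n$.

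I would first verify the mechanism on a single pair of consecutive iterations, in direct analogy with the two- and three-step unfoldings in the proof of Lemma~\ref{lem:2}: there the shift of the target parameter $i$ forced an eventual large discard, and here the analogue is the shift of the pivot's rank relative to the inherited partial order. Pinning down the base case quantitatively, i.e., exactly how many of the $2n/3$ retained elements are forced below (or above) the next pivot given the previous grouping, is the content I would settle before attempting the general window. Note that the conjecture claims only $o(n\log n)$ and not $O(n)$, which is consistent with expecting this argument to shave the logarithmic factor by bounding the number of ``full'' levels, rather than to drive the level sums strictly below $1$ at every step.

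The main obstacle, and the reason the question is open, is the \emph{regrouping} step: each iteration re-partitions the retained array into fresh consecutive groups of $3$, and the adversary controls the positions of elements, so the inherited partial order can be scattered across the new groups in a way that a priori destroys its usefulness for bounding the next median of medians. The heart of a proof must therefore be a combinatorial statement that the adversary cannot simultaneously scatter the accumulated structure so as to keep the next pivot-finding recursion maximally deep and keep the next partition maximally unbalanced; that is, it must convert the qualitative non-disjointness of $A$ and $M$ into a quantitative constraint relating pivot cost to discard size. This is exactly the step at which both prior lower-bound attempts and the present upper-bound program stall, and I expect it, rather than any of the surrounding bookkeeping, to be where the real difficulty lies.
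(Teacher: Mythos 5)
This statement is the paper's concluding \emph{conjecture}: the authors explicitly leave it open and supply no proof, so there is nothing in the paper to match your argument against. Judged on its own terms, your proposal is a research program rather than a proof, and you are candid about this — but the gap is not a matter of ``surrounding bookkeeping''; it is the entire content of the claim. The central trade-off lemma (in any constant window of iterations, either the pivot-selection cost is a vanishing fraction of what~\eqref{eq:4} charges or the contraction beats $(2/3)^k$) is precisely the quantitative coupling that nobody knows how to establish, and your sketch never defines the potential $\Phi$, never states the invariant precisely, and never carries out even the single-pair base case you propose to ``first verify.'' Moreover, the analogy you draw to Lemma~\ref{lem:1} is shakier than you suggest: in the original \select the pivot is the exact median of $M$, obtained by a full recursive call on $M$, and the standard analysis already extracts everything that the median-of-$M$ guarantee gives (namely the $n/3$ discard). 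The assertion that a pivot ``whose own median-of-medians recursion is deep is well-centered inside a large already-ordered portion of $A$'' is not a consequence of any structural counting in the paper; the cost of the sub-recursion on $M$ is governed by the positions of elements of $M$ relative to each other, while the discard fraction is governed by their positions relative to the rest of $A$, and no established mechanism ties the two together.

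That said, your diagnosis of \emph{where} the difficulty lives is accurate and agrees with the paper's own framing: the non-disjointness of $A$ and $M$ is exactly the flaw the authors identify in the published $\Omega(n\log n)$ ``solution,'' and the regrouping step is indeed what destroys any naive attempt to propagate accumulated order information across iterations. But identifying the obstacle is not the same as overcoming it. As written, the proposal would not settle the conjecture; it restates the conjecture as an unproved amortization lemma and correctly predicts that proving that lemma is the open problem.
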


\end{document}